\newcommand{\st}{\,\middle|\,}
\newcommand{\pb}[1]{\mathbb P\left[#1\right]}
\newcommand{\expect}[1]{\mathbb E\left[#1\right]}
\newcommand{\var}[1]{\mathbb V\left[#1\right]}
\newcommand{\expectVariable}[2]{\mathbb E_{#1}\left[#2\right]}
\newtheorem{definition}{Definition}
\newtheorem{theorem}{Theorem}
\newtheorem{lemma}{Lemma}
\newtheorem{corollary}{Corollary}
\newcommand{\boxA}{\textsc{Box~A}}
\newcommand{\boxB}{\textsc{Box~B}}
\newcommand{\onebox}{\textsc{1-box}}
\newcommand{\twobox}{\textsc{2-box}}
\newcommand{\oneboxer}{\textbf{1-boxer}}
\newcommand{\OM}{\textbf{Omega}}
\newcommand{\alice}{\textbf{Alice}}
\newcommand{\emma}{\textbf{Emma}}
\newcommand{\dan}{\textbf{Dan}}
\newcommand{\aixi}{\textbf{AIXI}}
\newcommand{\boxn}[1]{\textsc{Box-}#1}
\newtheorem*{rep@theorem}{\rep@title}
\newcommand{\newreptheorem}[2]{%
\newenvironment{rep#1}[1]{%
 \def\rep@title{#2 \ref{##1}}%
 \begin{rep@theorem}}%
 {\end{rep@theorem}}}
\title{Purely Bayesian counterfactuals \\ versus Newcomb's paradox}
\author{Lê Nguyên Hoang \\ EPFL}
\date{}
\begin{document}
\maketitle


\begin{abstract}
This paper proposes a careful separation between an entity's {\it epistemic system} and their {\it decision system}. Crucially, Bayesian counterfactuals are estimated by the epistemic system; not by the decision system. Based on this remark, I prove the existence of Newcomb-like problems for which an epistemic system necessarily expects the entity to make a counterfactually {\it bad} decision.

I then address (a slight generalization of) Newcomb's paradox. I solve the specific case where the player believes that the predictor applies Bayes rule with a supset of all the data available to the player. I prove that the counterfactual optimality of the \onebox{} strategy depends on the player's prior on the predictor's additional data. If these additional data are not expected to reduce sufficiently the predictor's uncertainty on the player's decision, then the player's epistemic system will counterfactually prefer to \twobox{}. But if the predictor's data is believed to make them quasi-omniscient, then \onebox{} will be counterfactually prefered. Implications of the analysis are then discussed.

More generally, I argue that, to better understand or design an entity, it is useful to clearly separate the entity's epistemic, decision, but also {\it data collection}, {\it reward} and {\it maintenance} systems, whether the entity is human, algorithmic or institutional.
\end{abstract}

\section{Introduction}

Newcomb's paradox is an iconic paradox of decision theory. Introduced by \cite{nozick69}, the problem involves a player, call her \alice{}, and a predictor that we will name \OM{}. \OM{} predicts \alice{}'s behavior, and, essentially, determines her reward based on this prediction. This will then put \alice{} into some seemingly impossible dilemma.

On one hand, a seemingly {\it causal} argument suggests that \alice{} should ignore her predictability, especially once \OM{}'s prediction has {\it already} been made. This argument suggests that \alice{} should then adopt a strategy called \twobox{}.

On the other hand, by modifying her behavior, and assuming that \OM{}'s prediction is really reliable, \alice{} seems able to bias \OM{}'s prediction so that it becomes aligned with \alice{}'s interest. This suggests that \alice{} may be able to hack her predictability to gain more rewards — a strategy known as \onebox{}. More thorough details are provided in Section \ref{sec:problem}.

Remarkably, scholars are extremely divided on what strategy \alice{} should adopt. In fact, Newcomb's paradox seems to be unveiling a fundamental gap in our understanding of {\it decision theory}, and thus of related topics such as free will, game theory and algorithm design. Because of this, over the last half a century, Newcomb's paradox has received a lot of attention from philosophers (\cite{schlesinger74,locke78,horwich85,nozick94,nozick97}, to name a few), mathematicians (\cite{gardner74,wijayatunga19,giacopelli19}), economists (\cite{broome89,sugden91,weber16}), political scientists (\cite{brams75,frydman82}), computer scientists (\cite{aaronson13,everitt18}) and psychologists (\cite{bar72}). 

In addition to connections to fundamental philosophical problems, Newcomb's paradox has been linked to practical problems, such as the voting dilemma (see \cite{elster87}). Assuming that \alice{}'s vote has a negligible effect, but is time-costly for her, intuitively, the equivalent of \twobox{} would be to argue that \alice{}'s optimal selfish strategy is to {\it not} take the time to vote. However, it has been argued that, by going voting, \alice{} changes what \alice{} would predict about the behaviour of individuals similar to her, thereby significantly increasing the probability that \alice{}'s favorite candidate will get elected — which is in \alice{}'s selfish interest. A \onebox{} argument suggests that \alice{} should vote.

In this paper, we present a novel analysis of Newcomb's paradox. The core idea of the analysis is a clear separation between an individual's {\it epistemic system}, and their {\it decision system}. The epistemic system will be assumed to be {\it purely Bayesian}, which means that all of its thoughts result from data and the laws of probability. In particular, thereby, the epistemic system can engage in counterfactual reasoning, to determine what decisions yield the largest counterfactual expected rewards.

As any good Bayesian, at any point in time, the epistemic system must also assign a probability to any future event. This evidently includes the individual's future decision. Now, in principle, such a probability could take the values $0$ or $1$. However, we stress the fact that an epistemic system with such a perfect knowledge of the individual's decision {\it cannot} engage in a counterfactual reasoning. Under perfect knowledge, and under the laws of probability, Bayesian counterfactual reasoning becomes nonsensical.

As a result, most of this paper discusses the arguably more realistic case of {\it imperfect} knowledge of the decision system. This means that the epistemic system fails to know with full certainty the algorithm executed by the decision system, and the decision system's inputs. In such a case, in Section \ref{sec:impossibility}, we prove Theorem \ref{th:impossibility}, which says that no decision system can be expected to guarantee counterfactual optimization. We then discuss consequences of the theorem for decision theory.

In Section \ref{sec:newcomb}, we then tackle a generalization of Newcomb's paradox, under the assumption that a Bayesian \OM{} has collected a supset of \alice{}'s data. We show that the counterfactual optimality of a decision strongly depends on the extent to which \alice{} believes the predictor \OM{} to know more than her about her decision system. Indeed, if \alice{} does not believe that \OM{} knows more than her, then \twobox{} is the only counterfactually optimal decision. But if \alice{} expects \OM{} to know a lot more to the point of being quasi-omniscient, then \onebox{} becomes the only counterfactually optimal decision. These findings are formalized by Theorem \ref{th:main}, and by its diverse corollaries (see Section \ref{sec:corollaries}).

Section \ref{sec:discussion} will then discuss our results, and raise further questions. First, we analyze the impact of relaxing the assumptions of the previous section. Second, we discuss implications of our analysis to practical Newcomb-like problems. Third, we generalize the main idea of the paper, namely the separation of epistemic and decision systems, to a further separation of different systems of any information processing entity. 

Finally, Section \ref{sec:conclusion} concludes.

\section{Counterfactual optimization is impossible}
\label{sec:impossibility}

In this section, we show that no entity can guarantee counterfactual optimization. To understand this claim, we first insist on the distinction between an entity's {\it epistemic system} and their {\it decision system}. We then stress the fact that counterfactual optimization is not an {\it algorithm}, but a {\it property}, which results from the interaction between the epistemic system, the decision system and the problem at hand. We then state and prove the impossibility of counterfactual optimization, and discuss consequences. 

\subsection{Epistemic versus decision systems}
\label{sec:epistemic_decision}

In this paper, we study how an information processing entity, like a human, a machine or an organization, infers a world view and makes a decision. Confusingly, these two different tasks seem entangled. On one hand, it seems important to first infer a world view before making a decision. But as highlighted by Newcomb-like paradoxes, it seems that a decision can sometimes inform a world view.

To disentangle views and decisions, we propose to clarify the distinction between these two tasks. In fact, we will assume that each task is handled by a specific system of the entity. 
More specifically, in this paper, we focus on an entity that we call \alice{}, which possesses both an epistemic and a decision system. We call them respectively \emma{} and \dan{}. \emma{} will infer the state of the world based on her data. \dan{} will compute decisions based on his data.

\subsubsection{Bayesianism}

Except in Section \ref{sec:non-bayesian}, we assume \emma{} to be a {\it pure Bayesian}. In other words, \emma{} will apply {\it solely} the laws of probabilities to compute credences and expectations\footnote{This contrasts with some versions of causal decision theory that invoke \cite{pearl18}'s ``do-operator''. This operator is thus {\it added} to Bayesianism, and also depends on some causal modeling of the problem. As a result, we do not consider it to be {\it purely} Bayesian.}. In particular, \emma{}'s credence in a theory $T$ of the state of the world, given the data $D_{\bf E}$ she collected, will be given by Bayes rule:
\begin{equation}
  \pb{T \st D_{\bf E}} = \frac{\pb{D_{\bf E} \st T} \pb{T}}{\pb{D_{\bf E}}}.
\end{equation}
Note that we adopt here the purely Bayesian interpretation of probabilities. Namely, as discussed by \cite{laplace1840}, such probabilities describe an entity's knowledge and uncertainty, and apply even in a deterministic universe. 

In this paper, we will discuss only \emma{} and \OM{}'s credences, mostly under the assumption that they have common priors, and that \emma{}'s data $D_{\bf E}$ are common knowledge. As a result, without loss of generality, we omit the conditioning by $D_{\bf E}$ in our notations.

\subsubsection{Counterfactuals}
\label{sec:counterfactual}

In our paper, we analyze \dan{}'s decision through the lenses of purely Bayesian counterfactuals. 

\begin{definition}
\label{def:counterfactual}
  \emma{} counterfactually prefers a decision $\textsc{X}$ if, for any alternative decision $\textsc{Y}$, \emma{} estimates the counterfactual expected rewards to be larger assuming $\textsc{X}$ than assuming $\textsc{Y}$, i.e.
  \begin{equation}
    \expect{\mathcal R \st \textsc{X}}
    \geq \expect{\mathcal R \st \textsc{Y}}.
  \end{equation}
  Recall that the expectations are implicitly also conditioned by \emma{}'s data $D_{\bf E}$.
\end{definition}

Counterfactual optimality is often invoked to argue in favor of one decision over the other. In the case of the COVID-19 pandemic, for instance, some observers have argued that lockdown was not actually as bad for the economy as one might expect. Their argument was not that the economy did not suffer during the lockdown; rather, they argued that the counterfactual world without a lockdown decision would have seen a similar blow to the economy. By then also taking public health into consideration, such observers argued that lockdown was very probably counterfactually optimal. In other words, these observers have essentially argued that
\begin{equation}
  \expect{\mathsf{Welfare} \st \textsc{Lockdown}} \geq \expect{\mathsf{Welfare} \st \textsc{No-Lockdown}},
\end{equation}
where $\mathsf{Welfare}$ refers to some sort of public good, which involves both public health and financial safety. Interestingly, some opponents agreed with the general approach to determine if $\textsc{Lockdown}$ was a good decision, but disagreed with the estimation of the counterfactual expectations\footnote{Note that the counterfactual expectations should be also all conditioned on all the available data at the moment of the decision lockdown, including the dangerous spread of the COVID-19 disease.}.

Counterfactual optimization is also often invoked in the analysis of Newcomb's paradox, which is presented in Section \ref{sec:problem}. Again, the disagreements often lie in the estimation of the counterfactual expectations. The \oneboxer{} will typically argue that $\expect{\mathcal R \st \onebox{}} = R$, where $R$ is the content of the opaque box if the predictor \OM{} predicts \onebox{}, while the counterfactual expected rewards are $\expect{\mathcal R \st \twobox{}} = r < R$, since \OM{} would then leave the opaque box empty. However, the \twobox{} will argue that $\expect{\mathcal R \st \twobox{}} = r + \expect{\mathcal R \st \onebox{}}$.
Clearly, both cannot be right simultaneously. Section \ref{sec:newcomb} will aim to clarify what is going on.


\subsubsection{Imperfect knowledge}
\label{sec:imperfect}

One feature of Bayesianism is that, if $\pb{D} = 0$, then $\pb{T \st D}$ is ill-defined. In a sense, a Bayesian cannot consider events that they have completely discarded. But this raises a serious technical issue, if we assume that \emma{} knows {\it for sure} \dan{}'s decision \textsc{X}. Indeed, \emma{} would then be completely discarding the possibility that \dan{} makes an alternative decision \textsc{Y}.
In other words, \emma{} may believe $\pb{\textsc{Y}} = 0$. But then, the counterfactual expectation $\expect{\mathcal R \st \textsc{Y}}$ would be ill-defined, which makes counterfactual optimization nonsensical.
To resolve this issue, in this paper, we will only consider Bayesians with {\it imperfect knowledge}.

\begin{definition}
  A Bayesian has imperfect knowledge about $X$ if, for any possible value $x$ of the event, the Bayesian assigns a strictly positive probability to $X=x$.
\end{definition}


In the case of \alice{}, it seems actually reasonable to assume that, especially in practice, \emma{} cannot guarantee that \dan{} will execute a given decision algorithm. After all, even if \emma{} carefully inspected \dan{} before \dan{} makes a decision, and even if \emma{} knows exactly the data given to \dan{}, \dan{} may still slightly change before the computation is executed. In other words, \emma{} may precisely know what \dan{} was like seconds ago, but she cannot know {\it for sure} what \dan{} will be doing in a few seconds, when \dan{} {\it actually} runs his computation to deliver a decision. It thus seems reasonable to assume that \emma{} has an imperfect knowledge of \dan{}'s decision.

\subsection{Formal impossibility}
\label{sec:formal_impossibility}

Finally, we can state the main result of this section. It asserts that no entity can guarantee counterfactual optimization.

\begin{theorem}
\label{th:impossibility}
    There exists a decision problem with $n$ options, for which any entity with imperfect knowledge assigns a probability at least $1-\sfrac{1}{n}$ to counterfactually bad decisions.
\end{theorem}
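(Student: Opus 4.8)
The plan is to exhibit an explicit decision problem in which the notions of ``counterfactually optimal'' and ``actually chosen'' are forced into tension by a symmetry/fixed-point argument, so that the entity's own probabilistic forecast of its decision is spread over the $n$ options and hence must place mass at least $1-\sfrac1n$ on options that are, after the fact, counterfactually suboptimal. Concretely, I would construct a problem with $n$ options $\textsc{X}_1,\dots,\textsc{X}_n$ where the reward attached to option $\textsc{X}_i$ is a decreasing function of \emma{}'s credence $p_i = \pb{\textsc{X}_i}$ that \dan{} picks $\textsc{X}_i$ — for instance, option $\textsc{X}_i$ yields reward $1-p_i$ (or more starkly, a large reward only if $p_i$ is small). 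This is a Newcomb-flavoured setup: being predictable to oneself is penalised. The key observation is that $\expect{\mathcal R \st \textsc{X}_i}$ is then a function of the forecast vector $(p_1,\dots,p_n)$, and since $\sum_i p_i = 1$, at least one coordinate satisfies $p_j \geq 1/n$; the option $\textsc{X}_j$ is therefore not the counterfactually preferred one (its counterfactual expected reward is dominated by that of some $\textsc{X}_k$ with $p_k < 1/n$, which exists unless all $p_i = 1/n$, and in that degenerate case one handles ties by making the reward structure strictly separating, e.g. breaking symmetry slightly).

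The steps, in order: (1) fix the problem instance and write down $\expect{\mathcal R \st \textsc{X}_i}$ explicitly in terms of \emma{}'s forecast $p = (p_1,\dots,p_n)$, using that the expectation is taken by the epistemic system \emma{} and so ``conditioning on $\textsc{X}_i$'' is a genuine Bayesian conditioning, legitimate precisely because imperfect knowledge guarantees $p_i > 0$ for all $i$; (2) identify the set of counterfactually preferred options as the $\argmin$ (resp.\ $\argmax$) over $i$ of the relevant quantity, and observe this set consists of options with the \emph{smallest} $p_i$; (3) since $p$ is a probability vector over $n$ options, the total mass on options that are \emph{not} of minimal $p_i$ is at least $1-\sfrac1n$ — because the minimal coordinate is at most $1/n$, so the remaining coordinates carry at least $1-\sfrac1n$ (and if several coordinates tie for the minimum at value exactly $1/n$ one perturbs the rewards to single out one optimum, leaving $1-\sfrac1n$ mass on the rest); (4) conclude that \emma{} assigns probability $\geq 1-\sfrac1n$ to \dan{} making a counterfactually bad decision, as required.

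The main obstacle I anticipate is the boundary/tie case and making the reduction clean: if \emma{}'s forecast happens to be uniform, $p_i = 1/n$ for all $i$, then naively \emph{every} option looks counterfactually optimal and the bad-decision mass is $0$, not $1-\sfrac1n$. The fix is to design the reward function so that exact ties cannot be counterfactually optimal for more than one option — e.g.\ order the options and let $\textsc{X}_i$ pay $1 - p_i + \varepsilon_i$ with $\varepsilon_1 > \varepsilon_2 > \cdots$ chosen small enough not to disturb the comparison when the $p_i$ differ noticeably, but decisive under ties — so that the counterfactually preferred set is always a singleton. Then the $p_i$ of that unique optimum is at most $1/n$ (it is a minimum-type coordinate up to the negligible perturbation), and the complementary mass is at least $1-\sfrac1n$. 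A secondary point to handle carefully is the logical consistency of \emma{}'s forecast with \dan{}'s actual behaviour: the theorem only claims a lower bound on the \emph{probability} \emma{} assigns to bad decisions, so I do not need \dan{} to actually realise any particular option — I only need the forecast vector $p$ to exist and be a valid distribution with full support, which imperfect knowledge supplies directly.
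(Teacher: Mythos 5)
Your overall strategy is the same as the paper's: make the reward of option $i$ a decreasing function of \emma{}'s own credence $p_i = \pb{\textsc{X}_i}$, observe that the counterfactually optimal option is then one of minimal credence, hence of credence at most $\sfrac{1}{n}$, and conclude that the remaining mass, at least $1-\sfrac{1}{n}$, sits on counterfactually bad options. The paper instantiates this with a hard threshold: \OM{} puts reward $1$ in the single box of smallest index $*$ with $\pb{\boxn{*}} \leq \sfrac{1}{n}$ (such an index exists since the $p_i$ sum to $1$) and reward $0$ everywhere else, so the unique good option automatically has mass at most $\sfrac{1}{n}$ and no tie-breaking analysis is needed.

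However, the specific construction you commit to --- reward $1 - p_i + \varepsilon_i$ with fixed constants $\varepsilon_1 > \cdots > \varepsilon_n$ --- does not deliver the stated bound. The perturbations must be fixed as part of the problem before $p$ is known, and for any fixed $\varepsilon_1 > \varepsilon_2 > 0$ there are forecast vectors on which they overturn the wrong comparison: with $n=2$, $\varepsilon_1 = 10^{-2}$, $\varepsilon_2 = 0$, and $(p_1,p_2) = (0.504, 0.496)$, option $1$ is the unique counterfactual optimum (since $0.506 > 0.504$) even though $p_1 > \sfrac{1}{2}$, so the bad-decision mass is $p_2 = 0.496 < 1 - \sfrac{1}{2}$. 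The tie problem you correctly flag is real (uniform $p$ makes every option optimal under the unperturbed $1-p_i$ rewards), but additive perturbation is the wrong repair; the right repair is the one you mention only in passing (``a large reward only if $p_i$ is small''): let the reward depend on $p_i$ through the indicator of $p_i \leq \sfrac{1}{n}$ and break residual ties deterministically, e.g.\ lexicographically, exactly as the paper does. One further point worth making explicit in either version: \emma{} can compute the $p_i$ and hence the rewards herself, so she knows the reward vector with certainty and the counterfactual comparison is between known numbers rather than genuinely uncertain expectations.
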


\begin{proof}
  Consider $n$ opaque boxes. \dan{} must choose one of the boxes. Denote $\boxn{i}$ the fact that \dan{} chooses the $i$-th box.

  But, now assume that \emma{} gets convinced that a predictor \OM{} has the same data as her and knows her prior. As a result, \emma{} believes with probability 1 that \OM{} can compute \emma{}'s credence $\pb{\boxn{i}}$ in \dan{} deciding $\boxn{i}$.
  Suppose also that \emma{} also believes with probability 1
  that \OM{} decides the content of box $i$ as follows. \OM{} computes the smallest value $*$ of $i$ such that $\pb{\boxn{i}} \leq 1/n$. \OM{} then sets $R_* = 1$, and $R_j = 0$ for $j \neq *$.

  Then, from \emma{}'s perspective, given that she knows the rewards $R_i$'s, any decision $\boxn{j}$ different from $\boxn{*}$ is counterfactually bad.
  However, we also know that \emma{} also assigns a probability at most $1/n$ to $\boxn{*}$. Thus, according to \emma{}, there is a probability at least $1-\sfrac{1}{n}$ that \dan{}'s decision is counterfactually bad.
\end{proof}

Note that our proof assumes that \emma{} knows {\it for sure} that \OM{} knows \emma{}'s data and prior. Interestingly, it is however robust to relaxing this condition, and to assuming that \emma{} only strongly believes that \OM{} knows her data and prior. This then yields a probability of at least $1-\sfrac{1}{n}-o(1)$ of a counterfactually bad decision, where $o(1)$ denotes a term arbitrarily small by considering that \emma{}'s credences are arbitrarily close to $1$.


\subsubsection{Why AIXI escapes our impossibility theorem}
\label{sec:aixi}

It is noteworthy that Theorem \ref{th:impossibility} does not apply to the framework of \aixi{}, a counterfactually optimal decision algorithm introduced by \cite{hutter2001,hutter2004}. In this framework, an entity called \aixi{} interacts with its environment by making decisions based on its observed past data, and the environment responds to \aixi{}'s decision by providing new data and a reward\footnote{The environment is assumed to reply according to a computable probability distribution that depends on past data and decisions, and the latest decision.}.

Essentially, \aixi{} escapes our impossibility theorem because the \aixi{} framework prevents external entities from making \aixi{}'s reward depend on \aixi{}'s uncertainty about its decision.
More precisely, \aixi{}'s expected rewards given \aixi{}'s decision \textsc{X} are assumed to be independent from \aixi{}'s uncertainty about deciding \textsc{X}. Formally, this assumption enforces the equality\footnote{This is formalized by an environment $\mu$ whose outputs given the past and \aixi{}'s decision cannot depend on the policy $\pi$ of \aixi{}. This allows \aixi{}'s counterfactual rewards to be a linear function of $\pi$, where $\pi$ is regarded as a probability distribution. Conversely, our proof of Theorem \ref{th:impossibility} designed a problem where the rewards given a decision \textsc{X} highly depend on $\pi$.}
$\expect{R \st \textsc{X} \wedge \pb{X}=p} = \expect{R \st \textsc{X} \wedge \pb{X} = q}$ for any values of $p$ and $q$. As a result, \aixi{}'s credence $\pb{\textsc{X}}$ in the fact that it will decide \textsc{X} cannot be exploited by the environment to bias \aixi{}'s rewards.
This contrasts with our proof of Theorem \ref{th:impossibility} in which, if $\pb{\textsc{X}}$ is large, then the reward given \textsc{X} would be designed to be small.

Unfortunately, the restricted framework of \aixi{} has been argued to make \aixi{} unrealistic. In particular, the fact that the environment cannot exploit its understanding of \aixi{} seems incompatible with {\it embedded agency}, as discussed by \cite{demski19}. Embedded agency assumes that any entity must be part of its environment\footnote{Embedded agency raises numerous other challenges, such as the uncomputability of Bayes rule proved by \cite{solomonoff09}, or its computational hardness (see \cite{aaronson12}), as well as the risk of wireheading the reward system (see \cite{everitt18}). On the positive side, it allows improvement by a maintenance system, as discussed in Section \ref{sec:maintenance}.}.
Given embedded agency, it then seems that any entity can be analyzed by some other entity \OM{} of the environment. \OM{} can then exploit its analysis to make counterfactual optimization impossible, as proved by Theorem \ref{th:impossibility}.


\subsubsection{A gap in decision theory}
\label{sec:decision_algorithm}

Perhaps the most important take-away of Theorem \ref{th:impossibility} is that counterfactual optimization cannot be a decision algorithm. Rather, it should be regarded as a property that no decision algorithm always satisfy. At best, a decision algorithm should be designed to {\it often} satisfy counterfactual optimization. However, the more general question of what ought to be expected from a ``good'' decision algorithm seems still far from being resolved.

Theorem \ref{th:impossibility} may share similarities with the no-free-lunch theorems in learning theory (\cite{wolpert96,joyce18}). These theorems suggest that there is no canonical property that identifies the ``good'' learning algorithms, which are arguably what a ``good'' epistemic system should implement. Instead, the defense of Bayesianism, like in \cite{hoang_bayes}, rests upon a myriad a desirable properties, some of which can be proved to be unique to Bayesianism, such as robustness to Dutch book arguments (\cite{teller73,skyrms87}), compatibility with logic (\cite{cox46,cox63,jaynes03}) and statistical admissibility (\cite{wald47,robert07}).

In fact, just as Bayesianism actually includes a large family of epistemologies, each derived from a given prior, the right path forward in decision theory might consist of proving that some desirable property can only be satisfied by decision algorithms taken from a certain family. Unfortunately, this research direction is out of the scope of the present paper.

\section{Newcomb's paradox}
\label{sec:newcomb}

In this section, we analyze Newcomb's paradox under the lens of Bayesian counterfactuals. But first, we need to consider a slight generalization of Newcomb's paradox, which more adequately fits the Bayesian framework.

\subsection{A slight generalization of Newcomb's problem}
\label{sec:problem}

In this paper, we consider a slight variation on Newcomb's problem to make it more realistic. In particular, we will care about the {\it data} that enables the predictor to make its prediction. Let us describe the problem we consider.

\alice{} enters a room with two boxes A and B.
\begin{itemize}
  \item \boxA{} is opaque. \alice{} cannot see what is inside. But she knows how the content of \boxA{} was decided, which is discussed below.
  \item \boxB{} is transparent. \alice{} sees that \boxB{} contains a reward $r>0$.
\end{itemize}
\alice{} is then told that she must decide between two strategies.
\begin{itemize}
  \item The \onebox{} strategy consists of only taking \boxA{}.
  \item The \twobox{} strategy consists of taking both \boxA{} and \boxB{}.
\end{itemize}
What makes Newcomb's paradox interesting is the way the content of \boxA{} is decided.
In some classical versions of Newcomb's paradox, some omniscient entity \OM{} makes a prediction. If \OM{} predicts that \alice{} will \onebox{}, then \OM{} puts a large reward\footnote{To fix ideas, you can imagine $r=\$1,000$ and $R=\$1,000,000$, though our analysis will hold for any other values of the rewards $r$ and $R$.} $R$ in \boxA{}. Otherwise, if \OM{} predicts that \alice{} will \twobox{}, then \OM{} leaves \boxA{} empty. 

However, the omniscience assumption is reasonably criticized for being too unrealistic. 
In this paper, we will instead assume that \OM{} is an information processing system which exploits huge amounts of data about \alice{} and about the world. Based on this large database $D_\Omega$, \OM{} infers a probability $\omega \triangleq \pb{\onebox{} \st D_\Omega}$ that \alice{} will \onebox{}. Now, given this probability guess $\omega$, \OM{} will throw a biased coin, which has a probability $\omega$ to land on heads. If the coin lands on heads, \OM{} will put the reward $R$ in \boxA{}. Otherwise, \OM{} leaves \boxA{} empty.

Note that the classical version of Newcomb's paradox is retrieved by assuming that \OM{} knows \alice{}'s decision process and all the inputs to this process. Indeed, \OM{} can then simply simulate this decision process to determine \alice{}'s decision. Our variant is thus a natural generalization of the classical paradox.

Now, \alice{} knows how \OM{} operates. She knows that \OM{} applies Bayes rule and she knows that \OM{} has processed a huge amount of data that \alice{} cannot access. \alice{} wants to maximize her expected (counterfactual) rewards. \alice{} now has to choose between the \onebox{} and the \twobox{} strategy. What should she do?

\subsubsection{Can \OM{} really know more?}
\label{sec:realism}

The odd feature of Newcomb's paradox is the assumption that some external observer \OM{} can better know \alice{} than she knows herself. At first sight at least, it may seem that no one can better guess what \alice{} will decide than \alice{} herself, right before she makes her decision.

However, \cite{libet93}'s famous experiment suggests that this may not be the case. An algorithm processing some magnetic resonance imaging of \alice{}'s brain may then be better able to predict some of \alice{}'s decision — at least a fraction of a second before the decision is made. More generally, it seems in fact clear that humans' decision systems are largely unknown to them. Similarly, and perhaps more strikingly, large institutions often hire external auditing companies to better understand how their decisions are made. Likewise, understanding in details what decisions were made by the YouTube recommender systems, and how they were made, is arguably largely out of reach of any data analytics systems within YouTube.

Perhaps a more realistic way for \OM{} to predict \alice{}'s decision is to collect large amounts of data, not only about \alice{}, but also about entities similar to \alice{}. Typically, a philosopher who posed Newcomb's paradox to generations of students might have gained a remarkable capability to predict their current students' intuitions for the Newcomb's paradox. Similarly, by analyzing all sorts of data available on social medias, an algorithm could detect patterns that would allow it to reliably predict what a given social media user may decide. In fact, \cite{wang18} designed one such algorithm, then replicated by \cite{leuner19}, that achieved superhuman performances at predicting sexual orientations from human faces.

\subsection{The main result}
\label{sec:omega}

Let us focus on the case where \emma{} believes that \OM{} has the same prior as her, and has a supset of her data.
We denote $p \triangleq \pb{\onebox{}}$ \emma{}'s prior on \onebox{}, and $\sigma^2 \triangleq \var{\omega}$ her prior variance on \OM{}'s prediction. Assuming that \emma{} has imperfect knowledge of \dan{} then corresponds to $0<p<1$.
Remarkably, under our assumptions, we can determine \emma{}'s counterfactual preferences based solely on the variables $p$ and $\sigma^2$.

\begin{theorem}
\label{th:main}
  Assume that \emma{} has imperfect knowledge of \dan{}. Suppose also that \emma{} believes that \OM{} is Bayesian, has the same prior as her and knows a supset of her data. Then, \emma{} counterfactually prefers to \onebox{}, if and only if, we have
  $\displaystyle \frac{r}{R} \leq \frac{\sigma^2}{p (1-p)}$.
\end{theorem}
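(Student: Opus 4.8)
The plan is to compute both counterfactual expected rewards, $\expect{\mathcal R \st \onebox{}}$ and $\expect{\mathcal R \st \twobox{}}$, purely from Bayes rule, and then compare them. First I would set up notation: write $\omega = \pb{\onebox{} \st D_\Omega}$ for \OM{}'s predicted probability, which from \emma{}'s perspective is a random variable with mean $\expect{\omega} = p$ (since \OM{} has a supset of \emma{}'s data and the same prior, the tower rule over \OM{}'s extra data gives $\expect{\omega} = \pb{\onebox{}} = p$) and variance $\var{\omega} = \sigma^2$. The content of \boxA{} is $R$ with probability $\omega$ (the biased coin) and $0$ otherwise, and crucially the coin flip is conditionally independent of \dan{}'s actual decision given $\omega$. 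So the joint law that matters is that of the pair (\dan{}'s decision, the event that \boxA{} contains $R$), and I need the two conditional expectations of $\mathcal R$.

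The key computation is $\pb{\boxA{} \text{ contains } R \st \onebox{}}$ and the same conditioned on \twobox{}. By Bayes rule, $\pb{\text{heads} \st \onebox{}} = \expect{\omega \cdot \indicator{\onebox{}}}/\pb{\onebox{}} = \expect{\omega \indicator{\onebox{}}}/p$. Now I would observe that, conditioned on $\omega$, the event \onebox{} has probability exactly $\omega$ (this is what it means for \OM{} to have inferred $\omega$ as \emma{}'s conditional credence given a superset of her data — \emma{}'s credence conditioned further on $\omega$ is $\omega$ itself). Hence $\expect{\omega \indicator{\onebox{}}} = \expect{\omega \cdot \omega} = \expect{\omega^2} = \sigma^2 + p^2$, so $\pb{\text{heads} \st \onebox{}} = (\sigma^2 + p^2)/p$. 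Symmetrically, $\expect{\omega \indicator{\twobox{}}} = \expect{\omega(1-\omega)} = p - (\sigma^2 + p^2) = p(1-p) - \sigma^2$, giving $\pb{\text{heads} \st \twobox{}} = (p(1-p)-\sigma^2)/(1-p)$. Then $\expect{\mathcal R \st \onebox{}} = R(\sigma^2+p^2)/p$ and $\expect{\mathcal R \st \twobox{}} = r + R(p(1-p)-\sigma^2)/(1-p)$.

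Finally I would impose the preference inequality $\expect{\mathcal R \st \onebox{}} \geq \expect{\mathcal R \st \twobox{}}$ and simplify: $R(\sigma^2+p^2)/p - R(p(1-p)-\sigma^2)/(1-p) \geq r$. Putting the left side over the common denominator $p(1-p)$, the numerator is $R[(1-p)(\sigma^2+p^2) - p(p(1-p)-\sigma^2)] = R[\sigma^2(1-p) + \sigma^2 p + p^2(1-p) - p^2(1-p)] = R\sigma^2$. So the condition becomes $R\sigma^2/(p(1-p)) \geq r$, i.e. $r/R \leq \sigma^2/(p(1-p))$, which is exactly the claimed criterion; and since the only inequality used is equivalent, this is an ``if and only if.''

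I expect the main obstacle to be justifying rigorously the identity $\pb{\onebox{} \st \omega} = \omega$ — that is, that conditioning \emma{}'s credence on the value of \OM{}'s prediction $\omega$ yields $\omega$ back. This is the linchpin that makes $\expect{\omega \indicator{\onebox{}}} = \expect{\omega^2}$ work, and it relies on the modeling assumption that $\omega$ is \emma{}'s own conditional probability of \onebox{} given a superset of her data, so that $\omega$ is $\sigma(\omega)$-measurable and $\expect{\indicator{\onebox{}} \st \sigma(\omega)} = \expect{\pb{\onebox{} \st D_\Omega} \st \sigma(\omega)} = \omega$ by the tower property. I would make this precise with a short measure-theoretic remark, and also note the positivity constraints ($0 < p < 1$ from imperfect knowledge, and $\sigma^2 \le p(1-p)$ automatically since $\omega \in [0,1]$, which also guarantees both conditional ``heads'' probabilities lie in $[0,1]$).
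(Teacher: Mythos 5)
Your proof is correct and follows essentially the same route as the paper: the linchpin identity $\pb{\onebox{} \st \omega} = \omega$ is exactly the paper's ``argument of authority'' lemma, and your computations of $\pb{A = R \st \onebox{}} = (p^2+\sigma^2)/p$ and $\pb{A=R \st \twobox{}} = p - \sigma^2/(1-p)$ match the paper's Lemmas on the counterfactual posteriors, with the only (cosmetic) difference that you obtain the \twobox{} case directly and symmetrically via $\expect{\omega(1-\omega)}$ while the paper derives it from the \onebox{} case by Bayes rule. The final algebraic comparison is identical.
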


Unsurprisingly, the larger $R$ is relatively to $r$, the more \onebox{} will tend to be preferable. But perhaps what's more interesting is the right-hand side. Note that the quantity $p(1-p)$ is essentially the variance according to \emma{}'s prior on what \dan{} will do. Thus, the right-hand side compares the variance of \OM{}'s prediction to the variance of \dan{}'s decision. Or, put differently, it is a measure of how much \emma{} expects \OM{} to know more than her about \dan{}. Intuitively, the more \emma{} knows about \dan{}, the more she will counterfactually prefer to \twobox{}. But the more she feels that \OM{} knows more than her, the more she will counterfactually prefer to \onebox{}.

Thus, at its core, Newcomb's paradox really seems to be about how much an entity believes that an observer can better know them than they know themselves. This may seem extremely confusing at first glance. But separating our epistemic system from our decision system arguably helps to clarify the origin of the paradox. Indeed, Newcomb's paradox is actually rather about how much the observer can better predict the output of the entity's decision system, than what the entity's epistemic system can predict. In particular, if this epistemic system is unreliable, then it may not be that hard to outperform it.



\subsubsection{Argument of authority}
\label{sec:authority}

Before proving Theorem \ref{th:main}, let us first note the following lemma, which will be critical in the analysis of Newcomb's problem. The lemma states the validity of the argument of authority, if the authority is more a knowledgeable honest Bayesian.

%

\begin{lemma}[Argument of authority]
\label{lemma:authority}
  If \emma{} believes that \OM{} is an honest Bayesian, that \OM{} has the same prior as her, and that \OM{} has a supset of her data, then \emma{} should believe whatever \OM{} says. More formally, we have
  \begin{equation}
    \pb{T \st \pb{T \st D_\Omega} =p } = p.
  \end{equation}
\end{lemma}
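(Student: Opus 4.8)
The plan is to treat \OM{}'s announced credence in $T$ as a random variable from \emma{}'s perspective, and to recognize the claimed identity as an instance of the tower property of conditional expectation, equivalently of Bayes' rule applied to a coarsening of \OM{}'s data. Concretely, I would set $Q \triangleq \pb{T \st D_\Omega}$, viewed by \emma{} as a function of the unknown data $D_\Omega$. The three hypotheses on \OM{} are exactly what is needed to place $Q$ inside \emma{}'s own probability space: \emph{honest} means that the number \OM{} announces really is its credence; \emph{same prior} means that credence is computed from the prior \emma{} also uses; and \emph{superset of her data} (together with the standing convention that everything is conditioned on $D_{\bf E}$) means $D_\Omega$ refines \emma{}'s current information, so that $Q = \expect{\indicator{T} \st D_\Omega}$ with the expectation taken in \emma{}'s model. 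In particular $\sigma(Q) \subseteq \sigma(D_\Omega)$, and $\{Q = p\}$ is an event \emma{} can reason about.

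Next I would carry out the computation. In the discrete case, write $S_p \triangleq \{d : \pb{T \st d} = p\}$ for the set of data realizations yielding posterior $p$, assume $\pb{D_\Omega \in S_p} > 0$, and expand
\begin{equation}
  \pb{T \st Q = p} = \frac{\pb{T \wedge D_\Omega \in S_p}}{\pb{D_\Omega \in S_p}} = \frac{\sum_{d \in S_p} \pb{T \st D_\Omega = d}\,\pb{D_\Omega = d}}{\sum_{d \in S_p}\pb{D_\Omega = d}} = \frac{p \sum_{d \in S_p}\pb{D_\Omega = d}}{\sum_{d \in S_p}\pb{D_\Omega = d}} = p,
\end{equation}
where the third equality uses that $\pb{T \st D_\Omega = d} = p$ for every $d \in S_p$ by definition of $S_p$. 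In the general case the same identity is just $\expect{\indicator{T}\st Q} = \expect{\expect{\indicator{T}\st D_\Omega}\st Q} = \expect{Q \st Q} = Q$, invoking $\sigma(Q)\subseteq\sigma(D_\Omega)$ and the tower property, which rewrites as $\pb{T \st Q = p} = p$ for almost every value $p$.

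The hard part will not be the algebra but the modeling step that precedes it: making precise that \OM{}'s reported credence coincides with \emma{}'s conditional probability given the refined data, i.e. that ``honest Bayesian with the same prior and a superset of her data'' is faithfully captured by $Q = \expect{\indicator{T}\st D_\Omega}$ in \emma{}'s space. A secondary technical point is well-definedness of the conditioning event: if $p$ lies outside the support of $Q$ the statement is vacuous, which is why the imperfect-knowledge / positive-probability caveat matters, and in a continuous setting one should flag the usual disintegration (conglomerability) subtleties so that the quantifier ``for any $p$'' is read as ``for almost every $p$''.
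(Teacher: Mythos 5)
Your proposal is correct and follows essentially the same route as the paper's proof: the paper likewise defines the set $\mathcal D_p$ of data realizations $D_\Omega$ with $\pb{T \st D_\Omega} = p$ and observes via the law of total probability that $\pb{T \st \mathcal D_p}$ is an average of terms all equal to $p$. Your write-up is simply a more careful version of the same argument, with the tower-property restatement and the positive-probability/disintegration caveats made explicit.
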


\begin{proof}
  To clarify, denote $\mathcal D_p$ the set of data $D_\Omega$ such that $\pb{T \st D_\Omega} = p$. Then $\mathcal D_p$ excludes all data which do not satisfy this. By the law of total probability, we see that $\pb{T \st \mathcal D_p}$ is then necessarily an average of terms $\pb{T \st D_\Omega}$, for $D_\Omega \in \mathcal D_p$. But all such terms equal $p$, hence the theorem.
\end{proof}

\subsubsection{Useful lemmas}
\label{sec:lemmas}

The proof of Theorem \ref{th:main} yields some insights into the mechanisms at play. In fact, it rests on the following four interesting observations, on what \emma{} predicts based on her uncertainty about \OM{}'s prediction. All the lemmas implicitly make the same assumptions as Theorem \ref{th:main}.

\begin{lemma}
\label{lemma:one-box}
  \emma{}'s prior $p$ on \dan{} deciding to \onebox{} is equal to the expectation of her prior on \OM{}'s prediction, i.e. $p = \expect{\omega}$.
\end{lemma}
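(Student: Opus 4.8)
The plan is to show that $p = \expect{\omega}$ by expressing \emma{}'s prior on \onebox{} as an expectation over \OM{}'s possible predictions, and then invoking the argument of authority (Lemma \ref{lemma:authority}) to evaluate the inner conditional probability. Concretely, I would condition on the value of \OM{}'s prediction $\omega = \pb{\onebox{} \st D_\Omega}$, treating $\omega$ as a random variable from \emma{}'s (equivalently \OM{}'s, since they share priors) point of view.

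First I would write, by the law of total expectation applied to the indicator of \onebox{},
\begin{equation}
  p = \pb{\onebox{}} = \expect{\pb{\onebox{} \st \omega}},
\end{equation}
where the outer expectation is over \emma{}'s prior distribution on $\omega$. Next I would argue that $\pb{\onebox{} \st \omega} = \omega$: this is exactly the content of Lemma \ref{lemma:authority} with $T = \onebox{}$, since conditioning on the event $\{\omega = q\}$ is conditioning on the event that \OM{}'s posterior credence in \onebox{} equals $q$, and the lemma says \emma{}'s credence in $T$ given that this honest more-knowledgeable Bayesian reports $q$ is itself $q$. Substituting this back gives $p = \expect{\omega}$, which is the claim.

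The step I expect to need the most care is the identification of "conditioning on $\omega$" with "conditioning on the event $\mathcal D_q$ that \OM{}'s reported credence equals $q$", so that Lemma \ref{lemma:authority} genuinely applies pointwise in $q$ and then integrates cleanly against \emma{}'s prior law of $\omega$. One should check that $\omega$ is well-defined as a random variable (it is a deterministic function of $D_\Omega$, and \emma{} has a prior over $D_\Omega$, hence a pushforward prior over $\omega$), and that no measure-zero issues arise — but since the paper works informally with these Bayesian credences and has already granted Lemma \ref{lemma:authority}, the argument is essentially immediate once the bookkeeping is set up. This lemma is really just a restatement of the tower property combined with the authority lemma, so I would keep the proof to two or three lines.
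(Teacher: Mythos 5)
Your proposal is correct and matches the paper's own proof essentially verbatim: both apply the law of total probability (tower property) to write $p = \expectVariable{\omega}{\pb{\onebox{} \st \omega}}$ and then invoke Lemma~\ref{lemma:authority} with $T = \onebox{}$ to get $\pb{\onebox{} \st \omega} = \omega$. Your extra remark about identifying conditioning on $\omega$ with conditioning on the event $\mathcal D_q$ is exactly the bookkeeping the paper glosses over, so nothing is missing.
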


\begin{proof}
  According to the law of total probability, $\pb{\onebox{}} = \expectVariable{\omega}{\pb{\onebox{} \st \omega}}$.
  By Lemma \ref{lemma:authority}, if \emma{} learned \OM{}'s prediction $\omega = \pb{\onebox{} \st D_\Omega}$, then she too would assign a probability $\omega$ to \dan{} deciding to \onebox{}.
  Therefore, $\pb{\onebox{} \st \omega} = \omega$. Thus, $p = \pb{\onebox{}} = \expect{\omega}$.
\end{proof}

\begin{lemma}
\label{lemma:boxA}
  \emma{} assigns a prior probability $p$ to \boxA{} containing the large reward $R$, $\pb{A=R} = p$.
\end{lemma}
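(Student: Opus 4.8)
The plan is to mimic the structure of the proof of Lemma \ref{lemma:one-box}, replacing the event \onebox{} by the event $A=R$, and exploiting the coin-flipping mechanism by which \OM{} fills \boxA{}. The key observation is that, conditionally on \OM{}'s prediction $\omega$, the event $A=R$ is decided by a single biased coin flip with probability $\omega$ of landing heads, and this mechanism is common knowledge. Hence \emma{}'s credence satisfies $\pb{A=R \st \omega} = \omega$.

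First I would invoke the law of total probability, conditioning on the value of \OM{}'s prediction $\omega$:
\begin{equation}
  \pb{A=R} = \expectVariable{\omega}{\pb{A=R \st \omega}}.
\end{equation}
Next I would argue that, since \emma{} knows exactly how \OM{} operates — namely that \OM{} throws a coin with bias $\omega$ and puts $R$ in \boxA{} precisely when it lands heads — her conditional credence in $A=R$ given the realized value $\omega$ is exactly $\omega$; that is, $\pb{A=R \st \omega} = \omega$. (Here one uses that the coin flip is, by construction, independent of everything else once $\omega$ is fixed, so no additional information \emma{} holds can shift this conditional credence.) Substituting gives $\pb{A=R} = \expect{\omega}$.

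Finally I would close the argument by applying Lemma \ref{lemma:one-box}, which already establishes $p = \expect{\omega}$, so that $\pb{A=R} = \expect{\omega} = p$, as claimed.

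I do not anticipate a genuine obstacle here: the statement is essentially a corollary of Lemma \ref{lemma:one-box} together with the definition of the reward-assignment mechanism. The only point requiring a little care is the justification of $\pb{A=R \st \omega} = \omega$, i.e. making explicit that conditioning on $\omega$ already captures all relevant information \emma{} could have about the coin flip, so that her credence equals the coin's bias; this is where the assumption that the mechanism is common knowledge is used.
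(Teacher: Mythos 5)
Your proof is correct and follows essentially the same route as the paper's: law of total probability over $\omega$, the identity $\pb{A=R \st \omega} = \omega$ from \OM{}'s coin-flipping mechanism, and Lemma \ref{lemma:one-box} to conclude $\expect{\omega}=p$. Your explicit justification of $\pb{A=R \st \omega}=\omega$ is a welcome elaboration of what the paper states tersely as following from ``the decision algorithm of \OM{}.''
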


\begin{proof}
  According to the law of total probability, $\pb{A=R} = \expectVariable{\omega}{\pb{A=R \st \omega}}$. By the decision algorithm of \OM{} (see Section \ref{sec:problem}), we have $\pb{A=R \st \omega} = \omega$. Thus $\pb{A=R} = \expect{\omega} = p$.
\end{proof}

\begin{lemma}
\label{lemma:counterfactual-onebox}
  \emma{}'s posterior belief that \boxA{} contains the large reward $R$, given a \onebox{} decision by \dan{}, is given by $\pb{A=R \st \onebox{}} = p + \displaystyle \frac{\sigma^2}{p}$. Interestingly, this quantity is larger than the prior probability, especially if the variance $\sigma^2$ on \OM{}'s prediction is large and if \OM{} is expected to essentially predict \twobox{}..
\end{lemma}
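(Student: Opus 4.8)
The plan is to compute $\pb{A=R \st \onebox}$ via Bayes rule and the law of total expectation, conditioning on \OM{}'s prediction $\omega$. First I would write $\pb{A=R \st \onebox} = \pb{A=R \wedge \onebox} / \pb{\onebox}$ and recall from Lemma \ref{lemma:one-box} that $\pb{\onebox} = p$, so that the task reduces to evaluating the joint probability $\pb{A=R \wedge \onebox}$. By the law of total probability over the value of $\omega$, this equals $\expectVariable{\omega}{\pb{A=R \wedge \onebox \st \omega}}$.

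The key step is to show that, conditionally on $\omega$ (equivalently, conditionally on \OM{}'s data $D_\Omega$), the events $A=R$ and \onebox{} are independent, so that $\pb{A=R \wedge \onebox \st \omega} = \pb{A=R \st \omega}\,\pb{\onebox \st \omega}$. This should hold because, given $D_\Omega$, the content of \boxA{} is determined solely by \OM{}'s biased coin flip, whose outcome carries no additional information about \dan{}'s decision beyond what is already encoded in $D_\Omega$. Combining this with $\pb{A=R \st \omega} = \omega$ (the defining rule of \OM{}, already used in the proof of Lemma \ref{lemma:boxA}) and with $\pb{\onebox \st \omega} = \omega$ (which is exactly the argument-of-authority identity of Lemma \ref{lemma:authority}, as invoked in the proof of Lemma \ref{lemma:one-box}), I obtain $\pb{A=R \wedge \onebox \st \omega} = \omega^2$.

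It then remains to take the expectation: $\pb{A=R \wedge \onebox} = \expect{\omega^2}$. Finally, using $\expect{\omega^2} = \var{\omega} + \expect{\omega}^2 = \sigma^2 + p^2$ together with $p = \expect{\omega}$ (Lemma \ref{lemma:one-box}), I conclude $\pb{A=R \st \onebox} = (\sigma^2 + p^2)/p = p + \sigma^2/p$, as claimed; the blow-up as $p \to 0$ is exactly the stated phenomenon that the effect is most pronounced when \OM{} is expected to predict \twobox{}.

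I expect the conditional-independence step to be the main (indeed essentially the only) subtlety: one must argue carefully that the randomness of \OM{}'s coin is, from \emma{}'s viewpoint, independent of \dan{}'s decision once $D_\Omega$ is fixed, so that the joint factors as $\omega^2$ rather than, say, being forced to equal $\omega$. Everything else is a routine manipulation of Bayes rule and the identity $\var{\omega} = \expect{\omega^2} - p^2$.
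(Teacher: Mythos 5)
Your proposal is correct and follows essentially the same route as the paper: both arguments reduce to the conditional independence of $A=R$ and \onebox{} given $\omega$, the two identities $\pb{A=R \st \omega} = \omega$ (from \OM{}'s coin rule) and $\pb{\onebox{} \st \omega} = \omega$ (from Lemma \ref{lemma:authority}), and the computation $\expect{\omega^2} = p^2 + \sigma^2$. The only cosmetic difference is that you expand the joint probability $\pb{A=R \wedge \onebox{}}$ directly by total probability over $\omega$, whereas the paper expands the conditional $\pb{A=R \st \onebox{}}$ and then applies Bayes rule to reach the same factorization.
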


\begin{proof}
  By the law of total probabilities, we have
  \begin{align}
    &\pb{A=R \st \onebox{}}
    = \sum_{\omega} \pb{A=R \st \onebox{} \wedge \omega} \pb{\omega \st \onebox{}} \\
    &\qquad = \sum_{\omega} \frac{\pb{A=R \wedge \onebox{} \st \omega}}{\pb{\onebox{} \st \omega}} \frac{\pb{\onebox{} \st \omega} \pb{\omega}}{\pb{\onebox{}}} \\
    &\qquad = \sum_{\omega} \pb{A=R \st \omega} \pb{\onebox{} \st \omega} \frac{\pb{\omega}}{p} \\
    &\qquad = \frac{1}{p} \sum_\omega \omega^2 \pb{\omega} = \frac{\expect{\omega^2}}{p}
    = \frac{p^2 + \sigma^2}{p} = p + \frac{\sigma^2}{p}.
  \end{align}
  In the second line, we used Bayes rule. In the third line, we use the conditionally independence of events $A=R$ and \onebox{}, given $\omega$, as well as Lemma \ref{lemma:one-box}. In the fourth line, we exploited the way the decision of $A=R$ is decided by $\omega$, as well as the Lemma \ref{lemma:authority} which implies $\pb{\onebox{} \st \omega}= \omega$. The last line also uses the definition of $\sigma^2$.
\end{proof}

\begin{lemma}
\label{lemma:counterfactual-twobox}
  \emma{}'s posterior belief that \boxA{} contains the large reward $R$, given a \twobox{} decision by \dan{}, is given by $\pb{A=R \st \twobox{}} = p - \displaystyle \frac{\sigma^2}{(1-p)}$. Perhaps not surprisingly given Lemma \ref{lemma:counterfactual-onebox}, this posterior is smaller than the prior, especially for large values of $\sigma^2$ and when \OM{} is expected to essentially predict \onebox{}.
\end{lemma}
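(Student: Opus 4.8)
The plan is to follow the same template as the proof of Lemma \ref{lemma:counterfactual-onebox}, simply replacing the conditioning event \onebox{} by \twobox{}. First I would expand via the law of total probability over the possible values of \OM{}'s prediction $\omega$, then apply Bayes rule to rewrite $\pb{\omega \st \twobox{}} = \pb{\twobox{} \st \omega}\,\pb{\omega}/\pb{\twobox{}}$, and invoke the conditional independence of the events $A=R$ and \twobox{} given $\omega$ — the same independence used in Lemma \ref{lemma:counterfactual-onebox} — to obtain
\[
  \pb{A=R \st \twobox{}} = \frac{1}{\pb{\twobox{}}}\sum_\omega \pb{A=R \st \omega}\,\pb{\twobox{} \st \omega}\,\pb{\omega}.
\]

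Next I would substitute the three ingredients: $\pb{A=R \st \omega} = \omega$ by \OM{}'s coin-flipping rule (Section \ref{sec:problem}); $\pb{\twobox{} \st \omega} = 1-\omega$, which follows from Lemma \ref{lemma:authority} (so that $\pb{\onebox{}\st\omega}=\omega$) together with the fact that \onebox{} and \twobox{} are the only two options; and $\pb{\twobox{}} = 1-p$, the complement of \emma{}'s prior $p = \pb{\onebox{}}$. This reduces the sum to
\[
  \pb{A=R \st \twobox{}} = \frac{1}{1-p}\sum_\omega \omega(1-\omega)\pb{\omega} = \frac{\expect{\omega} - \expect{\omega^2}}{1-p} = \frac{p - (p^2 + \sigma^2)}{1-p} = p - \frac{\sigma^2}{1-p},
\]
where the final steps use Lemma \ref{lemma:one-box} ($\expect{\omega} = p$) and the definition $\sigma^2 = \var{\omega} = \expect{\omega^2} - p^2$.

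As both a sanity check and a shorter alternative route, I would note that \onebox{} and \twobox{} partition the event space, so Lemma \ref{lemma:boxA} gives $p = \pb{A=R} = \pb{A=R \st \onebox{}}\,p + \pb{A=R \st \twobox{}}\,(1-p)$; plugging in the value $p + \sigma^2/p$ from Lemma \ref{lemma:counterfactual-onebox} and solving for $\pb{A=R \st \twobox{}}$ yields the same expression immediately. I do not expect a genuine obstacle here: the only points needing a little care are the conditional-independence step and the identity $\pb{\twobox{}\st\omega} = 1-\omega$, both already secured by the preceding lemmas; everything else is the arithmetic of a variance. The promised qualitative remark — that this posterior drops below the prior $p$, the more so when $\sigma^2$ is large and \OM{} is expected to predict \onebox{} (i.e. $p$ close to $1$) — is then read off directly from the formula, since the correction term $\sigma^2/(1-p)$ is nonnegative and increasing in both $\sigma^2$ and $p$.
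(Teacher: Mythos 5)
Your proposal is correct, and in fact it contains the paper's own argument as its ``alternative route.'' The paper does not redo the expansion over $\omega$: it applies Bayes rule to write $\pb{A=R \st \twobox{}} = \pb{\twobox{} \st A=R}\pb{A=R}/\pb{\twobox{}}$, replaces $\pb{\twobox{} \st A=R}$ by $1-\pb{\onebox{} \st A=R}$, and then uses Bayes rule once more to import the value $p+\sigma^2/p$ from Lemma \ref{lemma:counterfactual-onebox} — which is algebraically the same as your decomposition $p = \pb{A=R \st \onebox{}}\,p + \pb{A=R \st \twobox{}}\,(1-p)$. Your primary route, the direct sum $\frac{1}{1-p}\sum_\omega \omega(1-\omega)\pb{\omega}$, is a legitimate stand-alone derivation that does not depend on Lemma \ref{lemma:counterfactual-onebox} at all; it buys a self-contained proof and makes the source of the $-\sigma^2$ term ($\expect{\omega}-\expect{\omega^2}$) transparent, at the cost of repeating the conditional-independence and authority arguments. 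All three ingredients you cite ($\pb{A=R\st\omega}=\omega$, $\pb{\twobox{}\st\omega}=1-\omega$, $\pb{\twobox{}}=1-p$) are justified exactly as in the paper. One incidental observation: the final displayed line of the paper's proof reads $1-\frac{\sigma^2}{1-p}$, which is a typo for $p-\frac{\sigma^2}{1-p}$; your computation (and the lemma statement) gives the correct value.
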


\begin{proof}
Recall that, by Lemmas \ref{lemma:one-box} and \ref{lemma:boxA}, we have $\pb{A=R} = p = \pb{\twobox{}}$. Now, Bayes rule yields
\begin{align}
  &\pb{A=R \st \twobox{}} = \frac{\pb{\twobox{} \st A=R} \pb{A=R}}{ \pb{\twobox{}}} \\
  &\qquad = \frac{\left(1- \pb{\onebox{} \st A=R}\right) p}{1-\pb{\onebox{}}} \\
  &\qquad = \left(1 - \frac{\pb{A=R \st \onebox{}} \pb{\onebox{}}}{\pb{A=R}}\right) \frac{p}{1-p} \\
  &\qquad = \left(1-p - \frac{\sigma^2}{p} \right) \frac{p}{1-p}
  = 1- \frac{\sigma^2}{1-p},
\end{align}
where, in the fourth line, we used Lemma \ref{lemma:counterfactual-onebox}.
\end{proof}

\subsubsection{Proof of the main theorem}
\label{sec:proof_main}

Theorem \ref{th:main} then follows from our four previous lemmas.

\begin{proof}[Proof of Theorem \ref{th:main}]
  From Lemmas \ref{lemma:counterfactual-onebox} and \ref{lemma:counterfactual-twobox}, we have
  \begin{align}
    &\expect{A \st \onebox{}} = R \cdot \pb{A=R \st \onebox{}} = \left( p + \frac{\sigma^2}{p} \right) \bar R, \quad \text{and} \\
    &\expect{A+B \st \twobox{}} = R \left( p - \frac{\sigma^2}{1-p} \right) + r.
  \end{align}
  Now, \emma{} counterfactually prefers to \onebox{} if and only if the former conditional expectation is larger than the second, which is equivalent to saying $\displaystyle \frac{\sigma^2}{p} \geq \frac{r}{R} - \frac{\sigma^2}{1-p}$. Rearranging the terms yields the theorem.
\end{proof}

\subsection{Corollaries}
\label{sec:corollaries}

In this section, we discuss the corollaries of Theorem \ref{th:main}. These corollaries yield greater insight into the key variables of Newcomb's paradox.

\subsubsection{If \emma{} thinks she knows as much}
\label{sec:twobox}

\begin{corollary}
\label{cor:twobox}
  For any rewards $R>r$, if \emma{} has imperfect knowledge of \dan{} and if she expects \OM{} to have the same prior and data as her, then she will counterfactually prefer to \twobox{}.
\end{corollary}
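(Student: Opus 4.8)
The plan is to recognize that the hypothesis of the corollary is exactly the degenerate case $\sigma^2 = 0$ of Theorem~\ref{th:main}, and then to check that in this case \twobox{} is not merely "not dominated" but strictly preferred. First I would argue that if \emma{} believes \OM{} has the \emph{same} prior and the \emph{same} data $D_{\bf E}$ as her, then from \emma{}'s viewpoint \OM{}'s prediction is not random at all: it equals $\omega = \pb{\onebox{} \st D_{\bf E}} = p$ with certainty. This is the boundary instance of the "supset of her data" assumption, so all of Lemmas~\ref{lemma:authority}--\ref{lemma:counterfactual-twobox} still apply; in particular Lemma~\ref{lemma:authority} (or, more directly, the triviality of conditioning on one's own data) forces $\omega$ to take the single value $p$. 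Hence $\sigma^2 = \var{\omega} = 0$.

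Next I would substitute $\sigma^2 = 0$ into Theorem~\ref{th:main}. The imperfect-knowledge assumption gives $0 < p < 1$, so $p(1-p) > 0$ and the characterizing inequality becomes $r/R \le 0$; since $r, R > 0$ this fails, so \emma{} does not counterfactually prefer to \onebox{}. To upgrade this to a genuine preference for \twobox{}, I would recompute the two counterfactual expectations appearing in the proof of Theorem~\ref{th:main} with $\sigma^2 = 0$: Lemmas~\ref{lemma:counterfactual-onebox} and~\ref{lemma:counterfactual-twobox} both collapse to $\pb{A=R \st \onebox{}} = \pb{A=R \st \twobox{}} = p$, whence $\expect{A \st \onebox{}} = pR$ while $\expect{A+B \st \twobox{}} = pR + r$. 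Because $r > 0$, the latter strictly exceeds the former, and by Definition~\ref{def:counterfactual} \emma{} counterfactually prefers to \twobox{}.

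Since the argument is essentially a substitution, I do not expect a real obstacle. The only point that deserves a sentence of care is the justification that "same data" forces $\sigma^2 = 0$ — i.e. that \emma{}'s uncertainty about \OM{}'s prediction genuinely vanishes, rather than \OM{}'s prediction remaining a non-degenerate random variable in \emma{}'s eyes. It is also worth noting explicitly that the imperfect-knowledge hypothesis $0 < p < 1$ is exactly what keeps the divisions by $p$ and $1-p$ in the underlying lemmas legitimate, so nothing in the chain of lemmas becomes ill-defined when $\sigma^2$ is set to zero.
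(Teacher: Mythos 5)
Your proposal is correct and follows the same route as the paper, which simply observes that this is the special case $\sigma^2=0$ of Theorem~\ref{th:main}. Your extra care in noting that the failure of the \onebox{} condition $r/R \le 0$ must be upgraded to a strict preference for \twobox{} (via $\expect{A+B \st \twobox{}} = pR + r > pR = \expect{A \st \onebox{}}$) is a welcome refinement the paper leaves implicit.
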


\begin{proof}
  This is the special case where $\sigma^2 = 0$.
\end{proof}

In particular, if \emma{} is almost sure than \OM{} has the same prior and data as her, and if she is almost sure that \dan{} will \twobox{}, then she is almost sure that \dan{}'s decision is counterfactually optimal.

\subsubsection{The more \OM{} knows, the better \onebox{} looks}
\label{sec:data}

In this section, we show that $\sigma^2$ is an increasing function of the size of the data $D_\Omega$ that \emma{} suspects \OM{} to have. In other words, the more \emma{} thinks that \OM{} has a more data than her, the more she will lean towards counterfactually preferring to \onebox{}.

\begin{lemma}
\label{lemma:variance}
  Denote $\omega(D) = \pb{\onebox{} \st D}$ the prediction made based on data $D$ about \dan{}. If we know with that the data $D_\Omega^+$ contain data $D_\Omega$, even though both are unknown, then
  \begin{equation}
    \var{\omega(D_\Omega^+)} = \var{\omega(D_\Omega)} + \expectVariable{D_\Omega}{\var{\omega(D_\Omega^+) \st D_\Omega}}.
  \end{equation}
\end{lemma}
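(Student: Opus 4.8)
The plan is to recognize the claimed identity as an instance of the law of total variance, applied to the random variable $\omega(D_\Omega^+)$ with conditioning variable $D_\Omega$. Recall that for any two random variables $X$ and $Y$ one has $\var{X} = \var{\expectVariable{Y}{X \st Y}} + \expectVariable{Y}{\var{X \st Y}}$, itself a one-line consequence of $\var{X} = \expect{X^2} - \expect{X}^2$ together with the tower rule $\expect{X} = \expectVariable{Y}{\expect{X \st Y}}$. Taking $X = \omega(D_\Omega^+)$ and $Y = D_\Omega$, the second term on the right-hand side is exactly $\expectVariable{D_\Omega}{\var{\omega(D_\Omega^+) \st D_\Omega}}$, so it only remains to show that the first term equals $\var{\omega(D_\Omega)}$, i.e.\ that $\expectVariable{D_\Omega^+}{\omega(D_\Omega^+) \st D_\Omega} = \omega(D_\Omega)$.

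This last point is the crux, and I would prove it directly from the law of total probability. It is a ``tower'' identity for Bayesian predictions: conditioning the richer prediction on the poorer data returns the poorer prediction. Since $D_\Omega$ is a deterministic coarsening of $D_\Omega^+$ --- which is exactly what ``$D_\Omega^+$ contains $D_\Omega$'' is taken to mean --- the values $D^+$ of $D_\Omega^+$ that are consistent with a given $D_\Omega$ partition the conditioning event, so that
\begin{equation}
  \expectVariable{D_\Omega^+}{\omega(D_\Omega^+) \st D_\Omega}
  = \sum_{D^+} \pb{\onebox{} \st D^+} \pb{D^+ \st D_\Omega}
  = \pb{\onebox{} \st D_\Omega}
  = \omega(D_\Omega).
\end{equation}
Plugging this back into the law of total variance gives the lemma.

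The computation involves nothing deeper than elementary probability, so I do not expect a genuine obstacle; the one point I would be careful to state precisely is the reading of the containment hypothesis, namely that $D_\Omega$ must be a function of $D_\Omega^+$, so that observing $D_\Omega^+$ always determines $D_\Omega$. This is what legitimises the partition argument above, and what lets us identify the conditional expectation of the richer prediction given $D_\Omega$ with the coarser prediction $\omega(D_\Omega)$ itself, rather than with some other summary of the data. The discrete sums follow the convention used elsewhere in the paper; the continuous case is identical, with integrals replacing sums.
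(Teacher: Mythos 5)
Your proposal is correct and matches the paper's argument in substance: both reduce the claim to the law of total variance for $\omega(D_\Omega^+)$ conditioned on $D_\Omega$, combined with the tower identity $\expectVariable{D_\Omega^+}{\omega(D_\Omega^+) \st D_\Omega} = \omega(D_\Omega)$. The only cosmetic difference is that the paper re-derives the total-variance decomposition by hand while asserting the tower identity as a fact, whereas you cite the decomposition and instead spell out the tower identity via the law of total probability — arguably the more useful step to make explicit.
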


\begin{proof}
  Note that
  \begin{align}
    \var{\omega(D_\Omega^+)}
    &= \expect{\omega(D_\Omega^+)^2} - p^2
    = \expectVariable{D_\Omega}{\expectVariable{D_\Omega^+}{\omega(D_\Omega^+)^2 \st D_\Omega}} - p^2 \\
    &= \expectVariable{D_\Omega}{\expectVariable{D_\Omega^+}{\omega(D_\Omega^+)^2 \st D_\Omega} - \omega(D_\Omega)^2 + \omega(D_\Omega)^2} - p^2 \\
    &= \expectVariable{D_\Omega}{\expectVariable{D_\Omega^+}{\omega(D_\Omega^+)^2 \st D_\Omega} - \expectVariable{D_\Omega^+}{\omega(D_\Omega^+) \st D_\Omega}^2} \nonumber \\
    &\qquad \qquad \qquad \qquad \qquad \qquad + \expect{\omega(D_\Omega)^2} - \expect{\omega(D_\Omega)}^2 \\
    &= \expectVariable{D_\Omega}{\var{\omega(D_\Omega^+) \st D_\Omega}} + \var{\omega(D_\Omega)},
  \end{align}
  where, in the third line, we used the fact that, if $D_\Omega^+$ contains $D_\Omega$, then $\omega(D_\Omega) = \expect{\omega(D_\Omega^+) \st D_\Omega}$.
\end{proof}

As a corollary, we have $\var{\omega(D_\Omega^+)} \geq \var{\omega(D_\Omega)}$. This means that, for a Newcomb paradox with \OM{}$^+$ that has more data than \OM{}, the variance $\sigma^2$ of the predictor's prediction is larger. In particular, as \OM{} collects more and more data, \emma{} will tend more and more towards \onebox{}.

\subsubsection{If \boxB{} contains more rewards ($r \geq R$)}
\label{sec:reward_inconsistency}

Unsurprisingly, if the reward $r$ of \boxB{} is necessarily larger than the reward of \boxA{}, then \twobox{} is counterfactually preferable.

\begin{corollary}
\label{cor:r_geq_R}
  If $r \geq R$, if \emma{} has imperfect knowledge of \dan{}, and if she expects \OM{} to have the same prior and more data, then \emma{} counterfactually prefers to \twobox{}.
\end{corollary}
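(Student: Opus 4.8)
The plan is to reduce everything to the inequality already extracted in the proof of Theorem~\ref{th:main}. There it was shown that $\expect{A \st \onebox{}} = \left(p + \frac{\sigma^2}{p}\right) R$ while $\expect{A+B \st \twobox{}} = R\left(p - \frac{\sigma^2}{1-p}\right) + r$, so that \emma{} counterfactually prefers to \twobox{} exactly when $\frac{r}{R} \geq \frac{\sigma^2}{p(1-p)}$; note the denominator is strictly positive because imperfect knowledge of \dan{} means $0 < p < 1$. Hence it suffices to prove that, when $r \geq R$, the right-hand side is at most $1$, i.e.\ that $\sigma^2 \leq p(1-p)$.

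To get that bound I would invoke the standard fact that a random variable taking values in $[0,1]$ with mean $p$ has variance at most $p(1-p)$. Concretely, \OM{}'s prediction $\omega = \pb{\onebox{} \st D_\Omega}$ is a probability, so $\omega \in [0,1]$ and therefore $\omega^2 \leq \omega$; combining this with Lemma~\ref{lemma:one-box}, which gives $\expect{\omega} = p$, yields $\sigma^2 = \var{\omega} = \expect{\omega^2} - p^2 \leq \expect{\omega} - p^2 = p(1-p)$. Then $\frac{\sigma^2}{p(1-p)} \leq 1 \leq \frac{r}{R}$, the last step using $r \geq R > 0$, and the characterization above gives the claim.

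I do not expect a genuine obstacle here: the only points requiring a moment's care are that the imperfect-knowledge hypothesis is exactly what keeps $p(1-p)$ away from $0$, and that the ``more data'' hypothesis plays no quantitative role (it only places us inside the setting of Theorem~\ref{th:main} and Lemma~\ref{lemma:authority}); indeed $\sigma^2$ may well equal $0$, and the conclusion still holds, consistently with Corollary~\ref{cor:twobox}. The ``hard part,'' if any, is simply recognising that the relevant quantity $\sigma^2$ is the variance of a $[0,1]$-valued variable and is therefore automatically capped by $p(1-p)$, so the ratio on the right of the Theorem~\ref{th:main} criterion can never exceed $1$ while the ratio $r/R$ on the left is at least $1$.
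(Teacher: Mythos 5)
Your proof is correct, and it reaches the key inequality $\sigma^2 \leq p(1-p)$ by a genuinely different route than the paper. The paper invokes Lemma \ref{lemma:variance} with $D_\Omega^+ = D_\Omega \wedge \onebox{}$: augmenting \OM{}'s data with the decision itself turns $\omega(D_\Omega^+)$ into the indicator of \onebox{}, whose variance is exactly $p(1-p)$, and the law of total variance then gives $p(1-p) = \var{\omega(D_\Omega^+)} \geq \var{\omega(D_\Omega)} = \sigma^2$. You instead use the elementary bound for a $[0,1]$-valued random variable: $\omega^2 \leq \omega$, so $\expect{\omega^2} \leq \expect{\omega} = p$ by Lemma \ref{lemma:one-box}, whence $\sigma^2 \leq p - p^2$. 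Your argument is more self-contained and makes clear that the bound is an automatic consequence of $\omega$ being a probability with mean $p$, independent of any data-inclusion structure; the paper's argument is slightly less direct but ties the bound to the monotonicity-in-data theme of Section \ref{sec:data} (the decision outcome is the ``maximal'' data, so no predictor can have prediction variance exceeding $p(1-p)$). From there both proofs conclude identically via Theorem \ref{th:main}: $\frac{r}{R} \geq 1 \geq \frac{\sigma^2}{p(1-p)}$. Your side remarks — that imperfect knowledge is what keeps the denominator positive, and that the conclusion is consistent with Corollary \ref{cor:twobox} when $\sigma^2 = 0$ — are accurate.
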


\begin{proof}[Proof of Corollary \ref{cor:r_geq_R}]
  Lemma \ref{lemma:variance} with $D_\Omega^+ = D_\Omega \wedge \onebox{}$ implies that $p(1-p) \geq \sigma^2$. If $r \geq R$, then we have $\displaystyle \frac{r}{R} \geq 1 \geq \frac{\sigma^2}{p(1-p)}$.
\end{proof}

\subsubsection{Quasi-omniscient Omega}
\label{sec:omniscient}

One particularly interesting particular case is that of a quasi-omniscient \OM{}, which knows {\it a lot more} about \dan{} than \emma{} does. We formalize quasi-omniscience as follows.

\begin{definition}
  \emma{} believes \OM{} to be quasi-omniscient about \dan{}, if \emma{} rules out an uncertain prediction by \OM{}. Formally, \OM{} is $\delta$-omniscient if $\pb{\omega \in (\delta, 1-\delta)} = 0$.
\end{definition}

Note that $\delta$-omniscience is a property that depends on \emma{}'s prior on the data $D_\Omega$ that \OM{} has access to. In fact, it is arguably {\it not} a fundamental property of \OM{} itself.

One thing that makes this case interesting is that, for a fixed prior $p$, especially in the limit $\delta \rightarrow 0$, the variance $\sigma^2$ gets maximized. In particular, throughout this section, we assume $0 < \delta < \min\{ p, 1-p\}$, which allows to guarantee the fact that the conditional expectations that we will consider are well-defined.

\begin{lemma}
  If \emma{} believes that \OM{} knows more and is $\delta$-omniscient, then $\sigma^2 \geq p(1-p) - 3\delta - o(\delta)$.
\end{lemma}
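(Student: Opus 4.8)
The plan is to exploit the $\delta$-omniscience hypothesis, which forces \OM{}'s prediction $\omega$ to live almost surely in the two extreme pieces $[0,\delta] \cup [1-\delta,1]$, together with the identity $\expect{\omega} = p$ supplied by Lemma \ref{lemma:one-box} (valid precisely because \OM{} is believed to have a supset of \emma{}'s data). The whole computation is then a lower bound on the second moment $\expect{\omega^2}$, from which $\sigma^2 = \expect{\omega^2} - p^2$ follows.

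First I would set $q \triangleq \pb{\omega \geq 1-\delta}$, so that $1-q = \pb{\omega \leq \delta}$ by $\delta$-omniscience. The standing assumption $0 < \delta < \min\{p, 1-p\}$ rules out $q \in \{0,1\}$: $q=0$ would give $\expect{\omega} \leq \delta < p$ and $q=1$ would give $\expect{\omega} \geq 1-\delta > p$, both contradicting $\expect{\omega}=p$. Hence both conditional expectations $\expect{\omega \st \omega \geq 1-\delta}$ and $\expect{\omega \st \omega \leq \delta}$ are well-defined, and writing $p = \expect{\omega}$ as the convex combination $q \expect{\omega \st \omega \geq 1-\delta} + (1-q)\expect{\omega \st \omega \leq \delta}$, I bound the first conditional expectation by $1$ and the second by $\delta$ to get $p \leq q + \delta$, i.e. $q \geq p - \delta$. (The mirror bound $q \leq p + \delta$ also holds but is not needed.)

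Next I would bound the second moment from below by discarding the low piece entirely: on $\{\omega \geq 1-\delta\}$ we have $\omega^2 \geq (1-\delta)^2 \geq 1 - 2\delta$, while on the complement $\omega^2 \geq 0$, so $\expect{\omega^2} \geq q(1-2\delta) \geq (p-\delta)(1-2\delta) \geq p - 3\delta$, the last step dropping the nonnegative $2\delta^2$ and using $p \leq 1$. Subtracting $p^2$ yields $\sigma^2 \geq p - 3\delta - p^2 = p(1-p) - 3\delta$, which is in fact slightly stronger than the claimed bound; the $o(\delta)$ in the statement can be taken to be $0$ (or to absorb the discarded $2\delta^2 = o(\delta)$ term if one prefers a softer write-up).

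There is essentially no hard step here: the only point needing care is the well-definedness of the two conditional expectations, which is exactly what the hypothesis $0 < \delta < \min\{p,1-p\}$ guarantees, and the only bookkeeping subtlety is honestly collecting the three $\delta$'s — two from $(1-\delta)^2 \geq 1 - 2\delta$ and one from $q \geq p-\delta$. I would keep the inequalities in the order above so that each $\delta$-order loss is introduced explicitly.
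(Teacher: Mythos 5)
Your proof is correct and follows essentially the same route as the paper's: set $q = \pb{\omega \geq 1-\delta}$, derive $q \geq p-\delta$ from $\expect{\omega}=p$ via the law of total probability, and lower-bound $\expect{\omega^2}$ by $(1-\delta)^2 q$. Your bookkeeping is in fact slightly cleaner, yielding $\sigma^2 \geq p(1-p) - 3\delta$ exactly, without the paper's residual $o(\delta)$ term.
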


\begin{proof}
  First, recall that, by Lemma \ref{lemma:one-box}, $\expect{\omega} = p$. Now, denote $q \triangleq \pb{\omega \geq 1-\delta}$. By the law of total probability,
  \begin{align}
    p
    = \expect{\omega \st \omega \leq \delta} (1-q) + \expect{\omega \st \omega \geq 1- \delta} q
    \leq \delta + q,
  \end{align}
  and thus $q \geq p-\delta$. Therefore, $\sigma^2 = \expect{\omega^2} - p^2$. Now, note that
  \begin{align}
    \expect{\omega^2}
    &= \expect{\omega^2 \st \omega \leq \delta} \pb{\omega \leq \delta} + \expect{\omega^2 \st \omega \geq 1- \delta} \pb{\omega \geq 1- \delta} \\
    &\geq 0 + (1-\delta)^2 q \geq (1-\delta)^2 (p-\delta) \geq p - (1+2p)\delta - o(\delta).
  \end{align}
  Therefore, we have $\sigma^2 \geq p(1-p) - 3\delta - o(\delta)$.
\end{proof}

%

We can then state the following insightful corollary of Theorem \ref{th:main}.

\begin{corollary}
\label{cor:omniscient}
  If \emma{} has imperfect information about \dan{}, if she believes \OM{} knows more than her, and if she believes \OM{} to be $\delta$-omniscient, then, in the limit $\delta \ll \min\{p,1-p\}$, \emma{} counterfactually prefers to \onebox{}.
\end{corollary}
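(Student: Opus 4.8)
The plan is to read off the result from Theorem \ref{th:main} together with the variance lower bound established in the lemma immediately preceding this corollary. Recall that Theorem \ref{th:main} states that \emma{} counterfactually prefers to \onebox{} exactly when $r/R \leq \sigma^2/(p(1-p))$, so it suffices to show that, in the stated regime, the right-hand side eventually exceeds the fixed number $r/R$. Here I take the interesting regime $R > r > 0$ for granted (the case $r \geq R$ being covered by Corollary \ref{cor:r_geq_R}), so that $r/R < 1$ and there is a strictly positive slack $1 - r/R$ that the argument must outcompete.

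First I would apply the preceding lemma, whose hypotheses (namely that \emma{} believes \OM{} knows more than her and is $\delta$-omniscient, with $0 < \delta < \min\{p,1-p\}$ so that the relevant conditional expectations are well-defined) are exactly those assumed in the corollary. It gives $\sigma^2 \geq p(1-p) - 3\delta - o(\delta)$. Dividing through by $p(1-p) > 0$,
\begin{equation}
  \frac{\sigma^2}{p(1-p)} \;\geq\; 1 - \frac{3\delta + o(\delta)}{p(1-p)}.
\end{equation}

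Second I would bound the error term uniformly in $p$: since $\max\{p,1-p\} \geq 1/2$, we have $p(1-p) \geq \tfrac12 \min\{p,1-p\}$, hence $\frac{3\delta}{p(1-p)} \leq \frac{6\delta}{\min\{p,1-p\}}$, which vanishes in the limit $\delta \ll \min\{p,1-p\}$. Consequently $\sigma^2/(p(1-p)) \to 1$, so for all $\delta$ small enough relative to $\min\{p,1-p\}$ we obtain $\sigma^2/(p(1-p)) \geq r/R$, and the equivalence in Theorem \ref{th:main} yields the claimed counterfactual preference for \onebox{}.

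I do not anticipate a real obstacle: the corollary is essentially the substitution of the $\delta$-omniscient variance bound into the threshold inequality of Theorem \ref{th:main}. The only genuine subtleties are bookkeeping ones — making explicit that one needs $R > r$ so that $1 - r/R$ is a fixed positive constant, and reading ``the limit $\delta \ll \min\{p,1-p\}$'' precisely as: for every fixed $p \in (0,1)$ and every ratio $r/R < 1$, there is a threshold $\delta_0(p, r/R) > 0$ below which \emma{} counterfactually prefers to \onebox{}.
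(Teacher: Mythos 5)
Your proposal is correct and follows essentially the same route as the paper: substitute the $\delta$-omniscience variance bound $\sigma^2 \geq p(1-p) - 3\delta - o(\delta)$ into the threshold condition of Theorem \ref{th:main} and observe that $\sigma^2/(p(1-p)) \to 1 > r/R$ in the stated limit. If anything, your version is slightly more careful than the paper's (which writes the bound as an equality and leaves the $R > r$ assumption and the uniform control of the error term implicit), but the argument is the same.
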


\begin{proof}
  Algebraic manipulations show that $\frac{\sigma^2}{p(1-p)} = 1 - \frac{3\delta}{p (1-p)} - o\left(\frac{\delta}{p (1-p)}\right)$. We conclude by applying Theorem \ref{th:main}.
\end{proof}

In other words, for any given values of $R > r$, and for any given prior uncertainty from \emma{} about \dan{}'s decision, there is a sufficiently omniscient \OM{} such that \emma{} will counterfactually prefer to \onebox{}. Or, put differently, if \alice{} is pretty sure that she will \onebox{}, and if she is pretty sure that \OM{} is pretty-pretty-pretty sure of what \alice{} decide, then \alice{} will be pretty sure to make a counterfactually optimal decision.

This conclusion may clarify some of the disagreements about Newcomb's paradox. If one believes that \alice{} fully knows what she decides, then it seems meaningless to consider that \OM{} could know a lot better than \alice{} what she will decide. Similarly, by considering that \OM{} is fully omniscient, one may be tempted to exclude some scenarios, such as \OM{} being wrong. But such scenarios are arguably not quite Bayesian, nor realistic. By more carefully considering imperfect knowledge, purely Bayesian counterfactuals in Newcomb-like problems arguably seem less mysterious.



\section{Discussion}
\label{sec:discussion}

While our theorems closed a few problems, they arguably raised even more questions. In this section, we discuss take-aways and future research directions.

\subsection{What if \OM{} does not know more?}

The results of Section \ref{sec:newcomb} assumed that \emma{} believes that \OM{} has the same prior as her, as well as a supset of her data. In this section, we discuss the challenges to remove this assumption.

\subsubsection{If \emma{} knows more}
\label{sec:irreducible}

If \emma{} has a supset of \OM{}'s data, then she would know \OM{}'s prediction $\omega$. It can be shown that, as a result, if \emma{} has imperfect knowledge of \dan{}, then so does \OM{}. But this also implies that \OM{}'s actual decision to put $R$ in \boxA{} is not fully known to \OM{}. It may thus also be imperfectly known to \emma{}.

Unfortunately, this last bit of uncertainty may be where a Newcomb-like paradox kicks in. Typically, \OM{}'s coin toss might be further biased by another entity \OM{}$^+$ that does have more data than \emma{}\footnote{Recall that the probabilities we consider here are Bayesian. They describe an entity's ignorance. Therefore, they are not assumed to describe some fundamental randomness within the laws of physics.}.

Note though that if \emma{} believes that the coin toss is independent from \dan{}'s decision given \emma{}'s data, then \emma{} can be proved to counterfactually prefer to \twobox{}. Similarly, if \OM{}'s uncertainty on the content of \boxA{} disappears, for instance if \OM{} now decides to put $R$ in \boxA{} if and only if its credence in \onebox{} satisfies $\pb{\onebox{} \st D_\Omega} \geq 1/2$, then \emma{} will also counterfactually prefer to \twobox{}.

\subsubsection{No one has more data}
\label{sec:agreement}

The most general and realistic case is evidently when \emma{} and \OM{} have access to very different data. In such a case, Lemma \ref{lemma:authority} does not apply. Unfortunately, as a result, \emma{}'s counterfactual preferences are then much harder to analyze.

Nevertheless, it seems that the global take-away of our analysis still applies. Essentially, the more data \emma{} has compared to \OM{}, the more \twobox{} probably seems counterfactually preferable to her. Conversely, the more data \OM{} has, the more it seems that \emma{} will counterfactually prefer to \onebox{}. However, we leave open the problem of proving this mathematically.

\subsubsection{Diverging priors}
\label{sec:priors}

The other important assumption we have made throughout our analysis is that \emma{} believes that \OM{} has the same Bayesian prior as her. One way to weaken this assumption is to suppose instead that \emma{} has a prior on \OM{}'s possible priors. By expliciting the prior on priors, it is then only a computation for \emma{} to determine what decision is counterfactually preferable.

Again, it seems intuitive that, as \OM{} gathers more and more data, \emma{} will tend to counterfactually prefer to \onebox{}. We conjecture that, even in this case, if \emma{} assigns a strictly positive probability to \OM{}'s actual prior, and if she believes \OM{} to be sufficiently quasi-omniscient, then she will counterfactually prefer to \onebox{}. This has yet to be proved though.



\subsection{Practical lessons}
\label{sec:practical}

Our generalization of Newcomb's paradox arguably made it more realistic. However, there still seems to be a gap between our analysis and practical Newcomb-like problems.

\subsubsection{Newcomb-like pragmatic problems}


At its heart, Newcomb's paradox is about the possibility for an external observer to better predict what we will decide than what we can predict ourselves. Crucially, we showed that such an external observer does not need to be some supernatural omniscient entity. Any observer with vastly more data than us could potentially predict our future decisions better than we can — and use this to trick us.

Arguably, billions of dollars are currently being invested to create such algorithmic observers. By leveraging the huge amounts of data provided by their users, social media algorithms are constantly trying to predict how likely we are to click on the contents that they decide to recommend to us. Meanwhile, the way we use these social medias, often without paying our utmost attention to our decision systems, means that we may make decisions without even realizing it. In such a case, it seems reasonable to argue that algorithms actually already know many of our future decisions better than we do ourselves.

It may be interesting for future work to investigate how Newcomb's paradox can inform us on what we ought to do in such contexts; or at least, on what decisions would be least counterfactually wrong.

\subsubsection{Remarks for non-Bayesians}
\label{sec:non-bayesian}

We assumed that \emma{} and \OM{} are pure Bayesians. This hypothesis turned out to be extremely useful to gain insights into Newcomb's paradox. However, it is noteworthy that, in general, Bayesianism requires unreasonable computing resources, and thus cannot be applied exactly in practice, as explained by \cite{solomonoff09}. This leaves us with the question of the robustness of our results, if we now consider non-Bayesian entities.

Note that the assumption that \OM{} is (believed to be) Bayesian is not critical.
The critical feature of our analysis is rather that \OM{} is expected to have more data than \emma{}, which allows \OM{} to better predict what \dan{} will decide. In fact, it suffices that \OM{}'s prediction $\omega$ remains strongly correlated with \dan{}'s decision, even given \emma{}'s data.

The case of \emma{} is slightly more problematic, as \emma{} needs to apply the law of probabilities to compute counterfactuals. However, again, by assuming that she can reasonably well approximate the computation of the counterfactuals, then it is sensible to consider that she is engaging in approximate Bayesian counterfactual reasoning. The general take-aways of our analysis would then apply.

\subsubsection{Logical non-omnniscience}
\label{sec:complexity}


Even if epistemic systems know exactly their decision algorithms and the inputs of these algorithms, they may be incapable to derive the decision unless they perform themselves the computations of the decision algorithms. This postulate was called {\it computational irreducibility} by \cite{wolfram02}. A consequence of this postulate is that computational limits actually add another source of uncertainty, which cannot be taken into account by the Bayesian framework alone, as explained by  \cite{hoang_bayes}.

More generally, as argued by \cite{aaronson12}, computational complexity theory may be critical to understand diverse philosophical paradoxes. Its implications to Newcomb's paradox have yet to be analyzed.

\subsection{Deconstructing entities}

The core idea of our analysis was a clear separation between an entity's epistemic system and their decision system. In particular, we exploited the fact that the epistemic system should have some epistemic uncertainty about their decision system, which we argued to be critical to understand Newcomb's paradox. However, the decomposition of any entity into different systems with different tasks can be taken further, as argued by \cite{Hoang19}.

\subsubsection{Data collection system}
\label{sec:data}

Interestingly, our analysis highlights the importance of data in Newcomb's paradox. Indeed, we saw that \emma{} will consider that \onebox{} is counterfactually preferable if and only if she expects \OM{} to have access to sufficiently more data $D_\Omega$ so that \OM{}'s uncertainty about \dan{}'s decision is greatly reduced. Note that, however, we did not discuss {\it how} \OM{} would access such data.

More generally, data is arguably critical for any entity, both for the epistemic and the decision systems. Therefore, it seems critical for the entity, or for any analysis of the entity, to carefully design or understand their {\it data collection system}. Arguably, in the case of the COVID-19 crisis, in at least some parts of the world, this system was defective, typically because of a lack of tests or because of data misreporting. In practice, improving data collection systems is a key aspect of improving an entity.

In particular, especially for large-scale critical applications, designing a quality data collection systems seems to require features such as {\it data authentication}, {\it data storage} and {\it data communication}. Results in the fields of cryptography, differential privacy or distributed computing may add insights into Newcomb-like problems.

\subsubsection{Reward systems}
\label{sec:reward}

In this paper, we assumed that the rewards were given by \OM{}. In fact, \OM{} was essentially \emma{}'s reward system.

In practice, most entities' rewards are much more complex to compute. Humans may care about happiness, glory or flourishing, or a certain combination of all three. Organizations may want to advance their cause, publish papers, protect their nations, save lives or make their business sustainable. Finally, many algorithms have been designed to maximize profits, clicks or user attention.

In most cases, rewards are arguably what determine these entities' decisions the most, and thus their impacts on the world. In particular, a watch-time maximizing algorithm may flood the Internet with clickbait, virulent and unreliable videos, which may then encourage climate denialism \cite{allgaier19}, promote dangerous health recommandations \cite{johnson20} or normalize radicalizations \cite{ribeiro20}, while a profit-maximizing company may disregard its externalities, and a drug addict may neglect the long-term effects of their drug consumption. Note that, in these examples, the rewards to be maximized are not harmful in themselves; but they are oblivious of major negative {\it side effects}. Carefully understanding and designing entities' reward systems seems critical.

One important feature of reward systems is that they too need to rely on a data collection system and an epistemic system. To illustrate, the rewards to be given to a public health organization should arguably depend on the actual health of the population. However, if this population does not get tested, or if the mental health of the population is not properly inferred from the data, then the public health organization's rewards may be misleading. This defect of the computation of the rewards may then create flawed motivations for the entity, which can lead to poor decisions by the decision system, and to poor assessments by the epistemic system.

More generally, as argued by \cite{Hoang19}, the careful understanding and design of entities' reward systems is arguably the most critical feature to understand these entities, and to make them more {\it robustly beneficial} for the future of our planet. Perhaps Newcomb-like paradoxes can shed more light into what can and cannot be achieved from the interaction between the reward system and the other systems of an entity.

\subsubsection{Maintenance system}
\label{sec:maintenance}

One last component discussed in \cite{Hoang19} is the {\it maintenance system}. Given that any entity almost surely has imperfect data collection, reward, epistemic and decision systems, it seems critical that the entity is aware of this, and actively aims to combat the defects of these components.

One example of defect within humans, and arguably within organizations too, is the {\it confirmation bias}. A decision made by our decision systems can sometimes harm the capabilities of our epistemic systems, by irrationally favoring justifications of our decisions, even when they are poor (see the survey by \cite{nickerson98}).

Another example of defect is known as the law of \cite{goodhart75}, which says that ``when a measure becomes a target, it ceases to be a good measure''. Recently, \cite{Elmhamdi20} even showed that a fat-tail discrepancy between a deployed imperfect reward system and its ideal version could make reward maximization {\it infinitely bad}, according to the ideal reward system.

It thus seems critical for any {\it robustly beneficial} entity evolving in complex environments with feedback loops to have a maintenance system that keeps track of the imperfections of their epistemic, decision, data collection and reward system, and that actively tries to fix and improve these systems if needed.

\section{Conclusion}
\label{sec:conclusion}

This paper proposed to more clearly distinguish an entity's epistemic system from its decision system. I argued that this allowed to better understand Newcomb-like problems. In particular, based on this insight, I proved the impossibility of counterfactual optimization. Moreover, I showed that Newcomb's problem can be reformulated in terms of the capability of some observer to better predict the decision of an entity than the epistemic system of the entity can. Finally, I discussed numerous further research challenges to better understand information processing entities, and to better design entities that make robustly beneficial decisions.

\subsection*{Acknowledgement}

The author is thankful for insightful discussions with El Mahdi El Mhamdi, among others.

\bibliographystyle{plainnat}
\bibliography{references}

\end{document}